\newtheorem{theorem}{Theorem}
\newtheorem{LEMM}{Lemma}
\newcommand{\Ib}{{\mathbf I}}
\newcommand{\Xb}{{\mathbf X}}
\newcommand{\Yb}{{\mathbf Y}}
\newcommand{\Zb}{{\mathbf Z}}
\newcommand{\bb}{{\mathbf b}}
\newcommand{\eb}{{\mathbf e}}
\newcommand{\mb}{{\mathbf m}}
\newcommand{\ub}{{\mathbf u}}
\newcommand{\wb}{{\mathbf w}}
\newcommand{\xb}{{\mathbf x}}
\newcommand{\yb}{{\mathbf y}}
\newcommand{\zb}{{\mathbf z}}
\newcommand{\vtheta}{{\boldsymbol{\theta}}}
\newcommand{\vmu}{{\boldsymbol{\mu}}}
\newcommand{\vphi}{{\boldsymbol{\phi}}}
\newcommand{\vGamma}{{\boldsymbol{\Gamma}}}
\newcommand{\vLambda}{{\boldsymbol{\Lambda}}}
\newcommand{\vSigma}{{\boldsymbol{\Sigma}}}
\newcommand{\vPhi}{{\boldsymbol{\Phi}}}
\newcommand{\vDelta}{{\boldsymbol{\Delta}}}
\newcommand{\vXi}{{\boldsymbol{\Xi}}}
\begin{document}
%
\title{Belief Propagation for Joint Sparse Recovery}



%
\author{\IEEEauthorblockN{Jongmin Kim\IEEEauthorrefmark{1},
Woohyuk Chang\IEEEauthorrefmark{2}, 
Bangchul Jung\IEEEauthorrefmark{3},
Dror Baron\IEEEauthorrefmark{4}, and
Jong Chul Ye\IEEEauthorrefmark{1}}
\IEEEauthorblockA{\IEEEauthorrefmark{1}Dept. of Bio and Brain Engineering,
KAIST,
Guseong-dong, Daejon 305-701,  Korea \\ Email: franzkim@gmail.com, jong.ye@kaist.ac.kr}
\IEEEauthorblockA{\IEEEauthorrefmark{2}Research Laboratory of Electronics, MIT, MA, USA,
Email: whchang@mit.edu}
\IEEEauthorblockA{\IEEEauthorrefmark{3}Dept. of Info. and Comm. Engineering, Gyeongsang National Univ. Korea}
\IEEEauthorblockA{\IEEEauthorrefmark{4}Electrical and Computer Engineering Dept., North Carolina State University,  Raleigh, NC, USA}
}


\maketitle

\begin{abstract}
Compressed sensing (CS) demonstrates that
sparse signals can be recovered from underdetermined linear
measurements. We focus on the joint sparse recovery problem where multiple signals share the same common sparse support sets,
and they are measured through the same sensing matrix. 
Leveraging a recent
information theoretic characterization of single
signal CS, we formulate the optimal minimum mean square error
(MMSE) estimation  problem,  and derive a belief
propagation algorithm,  its relaxed version, for the joint sparse recovery problem and an
approximate message passing algorithm.
In addition, using density evolution,  we provide a sufficient condition for exact recovery.
%
\end{abstract}

\IEEEpeerreviewmaketitle

\section{Introduction}
Compressed sensing (CS)~\cite{CandesRUP,DonohoCS}
has revolutionalized sparse signal processing from underdetermined
linear measurements. CS offers a sharp contrast to the
traditional sensing and processing paradigm that first sample the
entire data at the Nyquist rate, only to later throw away most of
the coefficients.
Owing to the potential for reduced measuring rates, CS has
become an active research area in signal processing.

An important area in compressed sensing research is known as
distributed CS~\cite{BaronDCStech}. Distributed CS is based on
the premise that joint sparsity within signal ensembles enables a
further reduction in the number of measurements. Motivated by sensor
networks~\cite{Pottie2000}, preliminary work in distributed
CS~\cite{Duarte2006IPSN,BaronDCStech} showed that the number of
measurements required per sensor must account for the minimal
features unique to that sensor, while features that are common to
multiple sensors are amortized among sensors. 
Distributed CS led to a
proliferation of research on the multiple measurement vector problem
(MMV)~\cite{chen2006trs,kly2010cmusic,lb2010imusic}.
The MMV problem considers the recovery of a set of sparse signal vectors that 
share common non-zero supports through an identical sensing matrix, and ties into several applications of interest,  such as sensor networks, radar, parallel MRI, etc.

In single signal CS, recent results have established the fundamental
performance limits in the presence of
noise~\cite{RFG2009,GuoBaronShamai2009}. For sparse measurement
matrices, belief propagation CS reconstruction~\cite{CSBP2010} is
asymptotically optimal. Based on the revelation that the posteriors
in CS signal estimation are similar in form to outputs of scalar
Gaussian channels~\cite{GuoBaronShamai2009}, additional recent
results~\cite{Rangan2010,donoho2009message} have demonstrated the potential
for faster algorithms for implementing BP.  Another recent
breakthrough is the discovery of an
approximate message passing (AMP) algorithm \cite{donoho2009message},
which was originally derived as a fast approximation of BP, and is strikingly similar
to iterative thresholding~\cite{DaDeDe04} while achieving
theoretically optimal mean square error. 


Leveraging the aforementioned recent progress in the single measurement vector CS,
this paper extends BP to the MMV problem. In particular, we show that BP can be formulated as vector message passing by considering the general signal correlation structures between input vectors. 
Next, a relaxed BP algorithm is derived based
on a Gaussian assumption for the messages, and conditions for the edge independent covariance update are rigorously derived. Finally, we provide 
AMP update rules by further removing the edge dependency of the mean update. As a byproduct, 
 we provide a sufficient condition for exact joint sparse recovery using AMP state evolution.


\section{Signal and Measurement Model}

\bigskip

{\bf Signal model}:\
We consider a model in which an ensemble of~$J$ length-$N$ signals are jointly
sparse as follows.
Our notation for a matrix $\Xb$ uses $\xb^n$ for the $n$th row vector and 
$\xb_j$ as the $j$th column vector.
 
\begin{itemize}
\item [(S1)] Each signal $\{\xb_j\}_{j=1}^J$ belongs to $\mathbb{R}^N$, i.e., $\xb_j\in\mathbb{R}^N$, and for each $n=1,\cdots,N$, the $n$-th component $x_{nj}$ of $\xb_j$ for $j=1,\cdots,J$ is given by $x_{nj}=b_nu_{nj}$.
\item [(S2)] The $N$ random variables $\{b_n\}_{n=1}^N$ representing  support are independent and identically distributed (iid) Bernoulli random variables (RV's) with probability $\epsilon$ of being one and $1-\epsilon$ of being zero.
\item [(S3)] The random vectors $\ub^n\in \mathbb{R}^J$ 
are iid random vectors, which have a multivariate normal distribution $\mathcal{N}_J(\ub;0,\vLambda)$
with zero mean and covariance matrix $\vLambda$, 
    $$f_{\ub}(\ub^n)=\epsilon\mathcal{N}_J((\ub^n)^T;{\bf 0},\vLambda)+(1-\epsilon)\delta((\ub^n)^T)\ .$$ 
\end{itemize}

{\bf Measurement model}:\
For each $\mathbf{x}_j$, a measurement vector $\mathbf{y}_j \in \mathbb{R}^M$ containing $M$
noisy linear measurements is derived by multiplying the signal $\mathbf{x}_j$
by a measurement matrix $\vPhi\in\mathbb{R}^{M\times N} $, and adding noise $\mathbf{z}_j$,
\[
\mathbf{y}_j=\vPhi\mathbf{x}_j+\mathbf{z}_j,\quad j=1,\cdots,J,
\]
where the noise vector $\zb_j$ is i.i.d zero mean Gaussian with noise variance $\sigma^2$.
Following the terminology of sensor array signal processing, $\yb_j$ denotes the $j$-th snapshots, and we refer $J$ as the number of snapshots.

To analyze belief propagation (BP), we consider the factor graph
$G=(V,F,E)$ with variable node $V=[N]=\{1,\cdots,N\}$, 
factor nodes $F=[M]$, and
edges $E\subset\{(m,n):m\in[M],n\in[N]\}$ so that $G$ is a bipartite
graph with $M$ factor nodes and $N$ variable nodes. We let
$E=\{(m,n)\in [M]\times [N]:\Phi_{mn}\neq 0\}$.
We consider a large system limit where $\epsilon$, $\sigma$ and $J$ are constants, but the signal
length $N$ goes to infinity, and the number of measurements $M=M(N)$
also goes to infinity, 
\[
\lim_{N\rightarrow\infty} \frac{M(N)}{N} = \delta,
\]
where $\delta>0$ in problems of practical interest. In this setting,
we let $d$ be a positive integer such that $d<M$ and the elements of $\Phi$ depends on $d$.
Then we let $d\rightarrow\infty$, 
so that we can utilize the
central limit theorem to analyze the large system limit. For the
measurement matrix $\vPhi$ and the edges $E$ of factor graph, we
assume the following:
\begin{itemize}
\item [(M1)] The subgraphs $G_m^i$ and $G_n^i$ of the factor graph within
$2i$ hops from factor node $m$ and variable node $n$ are trees, meaning that
there are no local loops in the graph (for precise definitions, see \cite{Rangan2010}).
\item [(M2)] For all $n\in \{1,\cdots,N\}$,
$$|\{1\leq m\leq M:\Phi_{mn}\neq 0\}|=O(d)$$
and for all $m\in\{1,\cdots,M\}$,
$$|\{1\leq n\leq N:\Phi_{mn}\neq 0\}|=O(d)$$
respectively. Moreover, for any $(m,n)\in E$,
$\Phi_{mn}=O(1/\sqrt{d})$ as $d,N\rightarrow \infty$.
\item [(M3)] For all factor nodes $l$ in $G_m^i$, we have
\begin{eqnarray*}
\lim\limits_{d\rightarrow\infty}\lim\limits_{N\rightarrow\infty}\sum\limits_{n=1}^N (\Phi_{ln})^2=\frac{1}{\delta},~\lim\limits_{d\rightarrow\infty}\lim\limits_{N\rightarrow\infty}\sum\limits_{n=1}^N (\Phi_{ln})^3=0.
\end{eqnarray*}
For all variable nodes $r$ in $G_n^i$, we have
\begin{eqnarray*}
\lim\limits_{d\rightarrow\infty}\lim\limits_{M\rightarrow\infty}\sum\limits_{m=1}^M (\Phi_{mr})^2=1,~\lim\limits_{d\rightarrow\infty}\lim\limits_{M\rightarrow\infty}\sum\limits_{m=1}^M (\Phi_{mr})^3=0.
\end{eqnarray*}
\end{itemize}

\section{Vector Belief Propagation}


Let $\Xb:=[x_{ij}]_{i=1,j=1}^{N~J}$ and $\xb_j=[x_{1j},\cdots,x_{Nj}]^T$,
$g(\Xb)=p(\Yb=\Yb_0|\Xb)$, $h(\bb)=p(\bb)$,  $f_n(\xb^n,b_n)=
p(\xb^n|b_n)$ for $1\leq n\leq N$ and $1\leq j\leq J$. Then, to compute the MMSE estimate of $\xb^n$, we need $p(\xb^n|\Yb=\Yb_0)$, which is given as
\begin{eqnarray*}
&&p(\xb^n|\Yb=\Yb_0)\\
&=&\sum\limits_{\bb\in \{0,1\}^N}\int_{\Xb^{-n}}p(\Xb,\bb|\Yb=\Yb_0)\\
&\propto&\sum\limits_{b_n=0}^1p_n(\xb^n,b_n)p(b_n)\int_{\Xb^{-n}}g(\Xb)\prod\limits_{q\neq n}
\sum\limits_{b_q=0}^1 p_q(\xb^q,b_q)\\
&&\times\sum\limits_{\bb_{-n,q}\in \{0,1\}^{N-2}}p(\bb_{-n}|b_q)
\end{eqnarray*}
where $\Xb^{-n}$ denotes the collection of $\{\xb^j\}_{1\leq j\leq N,~j\neq n}$, $\bb_{-n}$ denotes $\bb$ with the $n$-th element omitted and $\bb_{-n,q}$ denotes $\bb$ with both $n$-th and $q$-th elements omitted. Due to independence,  $p(\bb_{-n}|b_n)=p(\bb_{-n,q}|b_q,b_n)p(b_q)$, we have the following:
\begin{eqnarray*}
&&p(\xb^n|\Yb=\Yb_0)\propto \nu_{f_n\rightarrow \xb^n}(\xb^n)\nu_{g\rightarrow \xb^n}(\xb^n),\\
&&\nu_{f_n\rightarrow \xb^n}:=\sum\limits_{b_n=0}^1 f_n(\xb^n,b_n)p(b_n),\\
&&\nu_{g\rightarrow \xb^n}:=\int_{\Xb^{-n}}g(\Xb)\prod\limits_{q\neq n}\sum\limits_{b_n=0}^1
f_q(\xb^q,b_q)p(b_q) \  .
\end{eqnarray*}

We let $\mu_{A\rightarrow B}(\cdot)$ denote a message passed from node $A$ to its adjacent node $B$ in the factor graph. Extending the sum-product rule of belief propagation to the vector case, the messages can be represented by the following equations where we use the superscript $(i)$ to denote estimated posteriors during iteration $i$:

\begin{eqnarray*}
\nu_{f_n\rightarrow \xb^n}^{(i)}(\xb^n)&\propto&\sum\limits_{b_n=0}^1 f_n(\xb^n,b_n)\nu_{b_n\rightarrow f_n}^{(i)}(b_n)\\
\nu_{g\rightarrow \xb^n}^{(i)}(\xb^n)&\propto&\int_{\Xb^{-n}}g(\Xb)\prod\limits_{q\neq n}
\sum\limits_{b_q=0}^1f_q(\xb^q,b_q)\nu_{b_q\rightarrow f_q}^{(i)}(b_q).
\end{eqnarray*}
%
%

In general, if the measurement matrix is sparse so that the factor graph has local tree-like properties, then
belief propagation produces the true marginal distribution of $\xb^n$ given the observations $\Yb_0$ \cite{GuoWang2008}.
 For dense matrices, belief propagation shows some interesting optimality properties in the large system limit \cite{GuoBaronShamai2009,donoho2009message}. 
 However, the complexity of evaluating marginal distributions grows exponentially in $d$ so that exact belief propagation is not suitable for dense matrices.

\section{Relaxed BP}\label{sec:rbp}

\subsection{Derivation}
Guo and Wang's original work \cite{GuoWang2008} presented an important results that the mean-square optimality of BP could be derived by a significantly simpler algorithm called relaxed BP. Relaxed BP overcomes the limitation of the BP by using a Gaussian approximation of the messages to minimize computation. Therefore, similar to Guo and Wang, we assume $\nu_{g_m\rightarrow \xb^n}^{(i)}(\xb^n)$ to be Gaussian under relaxed BP formulation.

Suppose that the measurement matrix $\vPhi$ satisfies the conditions (M1), (M2) and (M3). We let $\vmu_m^q(i)$ and $\vGamma_m^q(i)$ be the mean and covariance of $(\xb^q)^T$ with pdf $\nu_{\xb^q\rightarrow g_m}^{(i)}$ at the $i$-th iteration, respectively.   
Since we have the following linear relation:
$$\yb^m=\Phi_{mn}\xb^n+\sum\limits_{q\neq n}\Phi_{mq}\xb^q+\zb^m$$
for $m=1,\cdots,M$ and $n=1,\cdots,N$, 
the Gaussian form of the message  $\nu_{g_m\rightarrow \xb^n}^{(i)}$  is represented as
\begin{eqnarray}\label{prod:normal:general}
\nu_{g_m\rightarrow\xb^n}^{(i)}(\xb^n) 
&\propto&\mathcal{N}_J(\Phi_{mn}(\xb^n)^T;\zb_n^m(i),\vSigma_n^m(i)) \ .
\end{eqnarray}
Here, owing to the assumption that the rows of $\Xb$ are independent and $\zb^m$ has zero mean,  we can easily derive: 
\begin{eqnarray}
\zb_n^m(i)&:=&(\yb^m)^T-\sum\limits_{q\neq n}\Phi_{mq}\vmu_m^q(i) \label{eq:zb_n} \\
\vSigma_n^m(i)&:=&\sigma^2I+\sum\limits_{q\neq n}|\Phi_{mq}|^2\vGamma_m^q(i) \label{eq:vSigma} \  ,
\end{eqnarray}
since $\vGamma_m^q(i)$ is the error variance of $\xb^q$ with the pdf $\nu_{\xb^q\rightarrow g_m}^{(i)}$.
Now, we want to identify the pdf of the message $\nu_{\xb^q\rightarrow g_m}^{(i)}$.
Due to the sum-product rule, the message $\mu_{\xb^n\rightarrow g_m}^{(i+1)}(\xb^n)$ is given by
\begin{eqnarray}\label{nu-nm}
\nu_{\xb^n\rightarrow g_m}^{(i+1)}(\xb^n)&\propto&\nu_{f_n\rightarrow\xb^n}(\xb^n)\prod\limits_{q\neq m}
\nu_{g_q\rightarrow\xb^n}^{(i)}(\xb^n).
\end{eqnarray}
We already know that individual messages $\nu_{g_q\rightarrow\xb^n}^{(i)}(\xb^n)$  within the product are Gaussian. Hence, the product is also Gaussian.  
%
Hence, the pdf of the message is given by
\begin{eqnarray}\label{eq-nu-nm-general}
\nu_{\xb^n\rightarrow g_m}^{(i+1)}(\xb^n)&\propto&
[\epsilon \mathcal{N}_J((\xb^n)^T;{\bf{0}},\vLambda)+(1-\epsilon)\delta((\xb^n)^T)] \nonumber\\
&&\times \mathcal{N}_J
((\xb^n)^T;\vtheta_m^n(i),\tilde{\vSigma}_m^n(i))
\end{eqnarray}
where
$$\vtheta_m^n(i)=\left[\sum\limits_{l\neq m}
|\Phi_{ln}|^2(\vSigma_n^l(i))^{-1}\right]^{-1}\sum_{l\neq m} \Phi_{ln}(\vSigma_n^l(i))^{-1}\zb_n^l(i)$$
and
$$\tilde{\vSigma}_m^n(i)=\left[\sum\limits_{l\neq m}
|\Phi_{ln}|^2(\vSigma_n^l(i))^{-1}\right]^{-1},$$
which are calculated  by using the following formula:
\begin{eqnarray*}
\prod\limits_q \mathcal{N}_J(\xb,\mb_q,\vSigma_q)&\propto& \mathcal{N}_J(\xb,\tilde{\mb},\tilde{\vSigma})
\end{eqnarray*}
where
\begin{eqnarray*}
\tilde{\mb}&=&\left[\sum_q \vGamma_q^{-1}\right]^{-1}\sum_q \vGamma_q^{-1}\mb_q,\\
\tilde{\vSigma}&=&\left[\sum_q \vSigma_q^{-1}\right]^{-1}.
\end{eqnarray*}


Using Lemma \ref{lem-mean-var-gen}  in Appendix~A, we have the following  message passing rule for relaxed BP.
%
\begin{eqnarray*}
\label{theta-nm}\vtheta_m^n(i)&=&\left[\sum\limits_{l\neq m}
|\Phi_{ln}|^2(\vSigma_n^l(i))^{-1}\right]^{-1}\sum_{l\neq m} \Phi_{ln}(\vSigma_n^l(i))^{-1}\zb_n^l(i),\\
\label{mu-mn}\vmu_m^n(i)&=&\eta(\vtheta_m^n(i-1);\tilde{\vSigma}_m^n(i-1))\\
\label{gamma-nm}\vGamma_m^n(i)&=&V(\vtheta_m^n(i-1);\tilde{\vSigma}_m^n(i-1))\\
\label{z-mn}\zb_n^m(i)&=&(\yb^m)^T-\sum\limits_{q\neq n}\Phi_{mq}\vmu_m^q(i),\\
\label{sigma-mn}\vSigma_n^m(i)&=&\sigma^2I+\sum\limits_{q\neq m}|\Phi_{mq}|^2\vGamma_m^q(i)
\end{eqnarray*}
where
$\tilde{\vSigma}_n^m=\left[\sum\limits_{l\neq m}|\Phi_{ln}|^2(\vSigma_n^l(i))^{-1}\right]^{-1},$
and
\begin{eqnarray*}
\eta(\vtheta_m^n(i);\tilde{\vSigma}_m^n(i))&=&t_{mn}(i) \wb_{mn}(i) \label{eq:F} \\ 
V(\vtheta_m^n(i);\tilde{\vSigma}_m^n(i))&=&t_{mn}(i)(1 -t_{mn}(i) )\wb_{mn}(i)\wb^H_{mn}(i)
\\&&+t_{mn}(i) \left(\vLambda^{-1}+\left(\tilde{\vSigma}_m^n(i)\right)^{-1}\right)^{-1},
\end{eqnarray*}
with  $$
\wb_{mn}(i) = \left(\vLambda^{-1}+\left(\tilde{\vSigma}_m^n(i)\right)^{-1}\right)^{-1}\left(\tilde{\vSigma}_m^n(i)\right)^{-1}\vtheta_m^n(i),$$
 $$t_{mn}(i)=t(\vtheta_m^n(i);\tilde{\vSigma}_m^n(i)))$$ where
\begin{eqnarray}\label{eq:t}
t {(\vtheta;\vSigma)}=\frac{1}{1+\frac{1-\epsilon}{\epsilon}{|\vGamma+\vSigma|^{\frac{1}{2}}}/{|\vSigma|^{\frac{1}{2}}}
e^{-\frac{1}{2}\vtheta^T(\vLambda^{-1}-(\vSigma+\vGamma)^{-1})\vtheta}
} \  .
\end{eqnarray}
%


The update rule of relaxed BP is still complicated due to edge dependence of the messeges.
In particular, most of the computational overhead comes from the calculation of edge dependent   $\tilde{\vSigma}_n^m$ due to the matrix inversion.  Recall that  $\tilde{\vSigma}_n^m$ denotes the variance of the accumulated error from individual messages $\nu_{\xb^q\rightarrow g_m}^{(i)}, \forall q\neq n$.  Our goal is therefore to derive a edge {\it in}dependent relaxed BP algorithm that removes the dependency of $n,m$ in $\tilde{\vSigma}_n^m$  using the relaxed belief propagation in the large system limit. For this,  we need some extensions of the law of large numbers and the central limit theorem, which are given in \cite{Rangan2010}.
Using these results, we  can now remove the edge dependence of the 
message passing rule for relaxed BP as shown in the following theorem.
\begin{theorem}\label{thm:independency}
Consider the relaxed BP where $\vPhi$ and $(m,n)\in E$ satisfy (M1),
(M2) and (M3) for some fixed iteration number $i\geq 2$. Then as
$N,d\rightarrow\infty$, we have:
\begin{eqnarray}
\label{th-con} \lim\limits_{d,N\rightarrow \infty}\vtheta_m^n(k)&\sim&\Xb+\Zb(k-1)\\
\label{z-con} \lim\limits_{d,N\rightarrow\infty}\zb_n^m(k)&\sim&\mathcal{N}_J(0,\vSigma(k-1))\\
\label{s-con} \lim\limits_{d,N\rightarrow\infty}\vSigma_m^n(k)&=&\vSigma(k):=\sigma^2 I+\frac{1}{\delta}\vGamma(k)
\end{eqnarray}
and
\begin{equation}\label{ts-con}
\lim\limits_{d,N\rightarrow\infty}\tilde{\vSigma}_n^m(k)=\vSigma(k)
\end{equation}
for $k\leq i$, where $\Xb$ and $\Zb(k-1)$ has pdf $f_{\xb}:=\epsilon\mathcal{N}_J(\xb;0,\vLambda)+(1-\epsilon)\delta(\xb)$ and $\mathcal{N}_J(0,\vSigma(k-1))$ and
$$\vGamma(k):=E[V(\Xb+\Zb(k-1),\vGamma(k-1))],$$
and
$\vmu_m^n(1)=\hat{\xb}:=E(\Xb)$, $\vGamma_m^n(1)=\vGamma(1):={\rm
Cov}(\Xb)$ for all $m$ and $n$.
\end{theorem}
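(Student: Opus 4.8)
The plan is to argue by induction on the iteration index $k$, running over $1,\dots,i$ for the fixed $i\ge 2$; this finite horizon is exactly what makes (M1) usable, since for $k\le i$ a $k$-iteration computation tree lives inside a $2i$-hop neighbourhood and is therefore an actual tree, so that the messages entering any sum-product update are mutually independent. The engine is the vector-valued analogue of the generalized strong law of large numbers and the generalized central limit theorem of \cite{Rangan2010}: by (M2) every update is a sum of $O(d)$ terms of order $O(1/\sqrt d)$, so a matrix-weighted sum of an integrable function of the incoming independent messages concentrates on its mean (LLN), while a matrix-weighted sum of centred independent vectors converges to a Gaussian (CLT); assumption (M3) supplies the limiting moments --- $\sum_n\Phi_{ln}^2\to 1/\delta$ on the factor side and $\sum_m\Phi_{mr}^2\to 1$ on the variable side --- and its vanishing cubic moments play the double role of a Lindeberg condition and of forcing the self-coupling term $\Phi_{mn}\xb^n=O(1/\sqrt d)$ to vanish, which is precisely what decouples the residual $\zb_n^m(k)$ from the signal $\Xb$ in the limit. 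Every edge-dependent matrix is to be shown to converge to a common deterministic limit uniformly over $(m,n)\in E$.

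For the base case $k=1$ the statement is almost immediate: the initialization $\vmu_m^n(1)=\hat\xb=E(\Xb)$ and $\vGamma_m^n(1)=\vGamma(1)={\rm Cov}(\Xb)$ does not depend on the edge, so \eqref{eq:vSigma} gives $\vSigma_n^m(1)=\sigma^2 I+\big(\sum_{q\ne n}\Phi_{mq}^2\big)\vGamma(1)\to\sigma^2 I+\tfrac1\delta\vGamma(1)=\vSigma(1)$ by the factor-side moment condition in (M3), and then $\tilde{\vSigma}_n^m(1)=\big[\sum_{l\ne m}\Phi_{ln}^2(\vSigma_n^l(1))^{-1}\big]^{-1}\to\vSigma(1)$ by the variable-side moment condition and continuity of matrix inversion. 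Inserting $\yb^m=\Phi_{mn}\xb^n+\sum_{q\ne n}\Phi_{mq}\xb^q+\zb^m$ into \eqref{eq:zb_n} and then into the formula for $\vtheta_m^n(1)$, the matrix coefficient multiplying $(\xb^n)^T$ renormalizes to the identity, while the remaining matrix-weighted sum of the i.i.d.\ centred vectors $(\xb^q)^T-\hat\xb$ and of the independent Gaussian noise terms falls under the generalized CLT; its covariance is computed by (M3) to be $\tfrac1\delta\vGamma(1)+\sigma^2 I=\vSigma(1)$, which is \eqref{th-con} and \eqref{z-con} at $k=1$ once we identify each $\Zb(\cdot)$ with $\mathcal{N}_J(0,\vSigma(\cdot))$.

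For the inductive step, suppose \eqref{th-con}--\eqref{ts-con} hold through iteration $k-1$. Then $\vtheta_m^n(k-1)$ is asymptotically $\Xb$ plus an independent Gaussian of deterministic covariance $\vSigma(k-1)$, and $\tilde{\vSigma}_m^n(k-1)\to\vSigma(k-1)$; since the Bernoulli--Gaussian denoiser $\eta$ and its conditional-covariance map $V$ (the closed forms derived above) are continuous in both arguments, the per-edge outputs $\vmu_m^n(k)=\eta(\vtheta_m^n(k-1);\tilde{\vSigma}_m^n(k-1))$ and $\vGamma_m^n(k)=V(\vtheta_m^n(k-1);\tilde{\vSigma}_m^n(k-1))$ converge, in distribution, to $\eta$ and $V$ evaluated at $\Xb+\Zb$ with noise covariance $\vSigma(k-1)$. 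The generalized LLN applied to \eqref{eq:vSigma} --- which requires dominating $V$ by an integrable envelope, and here one uses that $V(\vtheta;\vSigma)$ grows at most quadratically in $\|\vtheta\|$ while $\vtheta$ has light tails --- then gives $\vSigma_n^m(k)\to\sigma^2 I+\tfrac1\delta\vGamma(k)=:\vSigma(k)$ with $\vGamma(k)=E[V(\Xb+\Zb;\vSigma(k-1))]$, which is the recursion in the theorem (reading the second slot of $V$ as $\vSigma(k-1)=\sigma^2 I+\tfrac1\delta\vGamma(k-1)$), and $\tilde{\vSigma}_n^m(k)\to\vSigma(k)$ follows, once more, by continuity of inversion. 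Finally, writing $\zb_n^m(k)=\Phi_{mn}(\xb^n)^T+\sum_{q\ne n}\Phi_{mq}\big((\xb^q)^T-\vmu_m^q(k)\big)+(\zb^m)^T$, the self-term is negligible, the estimation errors $(\xb^q)^T-\vmu_m^q(k)$ are asymptotically independent across $q$ (disjoint subtrees) and have vanishing mean (because $\eta$ is a conditional expectation), so the generalized CLT identifies the limiting law of $\zb_n^m(k)$ as $\mathcal{N}_J(0,\sigma^2 I+\tfrac1\delta\vGamma(k))=\mathcal{N}_J(0,\vSigma(k))$; feeding this through the $\vtheta_m^n(k)$ formula, whose weights again renormalize the signal coefficient to the identity, yields \eqref{th-con} at iteration $k$ and closes the induction.

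The step I expect to be the real difficulty is not the matrix bookkeeping but the faithful transfer of Rangan's conditioning/martingale argument to the vector setting: proving that the messages feeding each update are independent uniformly over the edge set and that the resulting matrix-weighted, vector-valued sums satisfy the LLN and the CLT with limits pinned down by (M3). Tightly coupled to this is the domination estimate making $E[V(\Xb+\Zb;\vSigma)]$ finite and continuous in $\vSigma$; without an integrable envelope for the conditional covariance $V$, the law-of-large-numbers step that converts $\sum_{q\ne n}\Phi_{mq}^2\vGamma_m^q(k)$ into $\tfrac1\delta\vGamma(k)$ does not follow automatically, and it is here that the specific structure of the prior (S1)--(S3) must be exploited.
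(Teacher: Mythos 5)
Your proposal follows essentially the same route as the paper's Appendix~B: induction on the iteration index, with the tree property (M1) supplying independence of incoming messages, (M2) killing the self-coupling term $\Phi_{mn}\xb^n$, the generalized LLN turning $\sum_{q\neq n}\Phi_{mq}^2\vGamma_m^q(k)$ into $\tfrac{1}{\delta}\vGamma(k)$, and the generalized CLT of Rangan applied to the centred estimation errors to identify the Gaussian limits of $\zb_n^m(k)$ and $\vtheta_m^n(k)$. The only differences are presentational (you fold the $\vtheta$ base case into $k=1$ where the paper starts its $\vtheta$ chain at $k=2$, and you explicitly flag the integrable-envelope condition for $V$ that the paper applies the LLN without discussing), so this is a correct reconstruction of the same argument.
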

\begin{proof}
See Appendix~B.
\end{proof}

When the measurement matrix is sparse, in the
large system limit, if the average degree $d$ grows as
$o(M^{1/(4k)})$, then there is a so-called asymptotic cycle-free
property\cite{GuoWang2008}. That is, the possibility of existence of
a cycle of length shorter than $k$ approaches zero. Hence, in the
large system limit,  the assumption (M1) in the above theorem is 
asymptotically correct if
$d=o(M^{1/(4k)})$.
Under this condition,  the edge
independence of $\vSigma(i)$  proved in the above theorem lead us to  replace the message passing rule
for relaxed BP as 
\begin{eqnarray}
\vmu_m^n(i)&=&\eta(\vtheta_m^n(i-1);\vSigma(i-1))\label{eq:mu-mn}\\
\vGamma_m^n(i)&=&V(\vtheta_m^n(i-1);\vSigma(i-1))\\
\vtheta_m^n(i)&=&\sum_{l\neq m} \Phi_{ln}\zb_n^l(i), \label{eq:theta-nm} \\
\zb_n^m(i)&=&(\yb^m)^T-\sum\limits_{q\neq n}\Phi_{mq}\vmu_m^q(i), \label{eq:z-mn}\\
\vSigma(i)&=&\sigma^2I+\frac{1}{\delta}\vGamma(i),
\end{eqnarray}
where
$$\vGamma(i)=\frac{1}{|E|}\sum\limits_{(m,n)\in E}V(\vtheta_m^n(i);\vSigma(i)).$$

\section{Approximate Message Passing for MMV}\label{sec:amp}

Recently, Donoho, Maleki and Montanari \cite{donoho2009message} developed the approximate message passing (AMP) for single measurement vector(SMV) problem $\yb=\vPhi \xb$, which shows significant advantages over the conventional iterative thresholding algorithm, while achieving similar performance to basis pursuit. The AMP was developed within the belief propagation framework.  
 In order to execute the belief propagation (or relaxed belief propagation), we must keep track of $2MN$ messages, but in applying AMP, we just need to keep track of $M+N$ messages so that AMP reduces computation. AMP is more suitable to large-scale applications, whereas basis pursuit often demands too much time. 


To derive AMP for MMV, we let
\begin{eqnarray}
\label{app-theta}\vtheta_m^n(i)&=&\vtheta^n(i)+\delta\vtheta_m^n(i)+O(1/d),\\
\label{app-mu}\vmu_m^n(i)&=&\vmu^n(i)+\delta\vmu_m^n(i)+O(1/d),\\
\label{app-z}\zb_n^m(i)&=&\zb^m(i)+\delta\zb_n^m(i)+O(1/d).
\end{eqnarray}
Substituting (\ref{app-z}) into (\ref{eq:theta-nm}),  
 (\ref{app-mu}) into (\ref{eq:z-mn}), and (\ref{app-theta}) into (\ref{eq:mu-mn}), respectively, we have the following results.
 
 \begin{theorem}
For the given signal model (S1)-(S3) and the measurement model  (M1)-(M3), the  approximate message passing algorithm for multiple measurement vectors is given by
\begin{eqnarray}
\label{mu-n-a}\vmu^n(i+1)&=&\eta(\vtheta^n(i);\vSigma(i))\\
\label{theta-n-a}\vtheta^n(i)&=&\sum\limits_{l=1}^M\Phi_{ln}\zb^l(i)+\vmu^n(i)\\
\label{z-m-a}\zb^m(i+1)&=&(\yb^m)^T-\sum\limits_{q=1}^N
\Phi_{mq}\vmu^q(i+1)\\
&&+\zb^m(i)\sum\limits_{q=1}^N  \eta'(\vtheta^q(i);\vSigma(i))
\Phi_{mq}^2 \end{eqnarray}
where
\begin{eqnarray}
\label{v-a}\vGamma(i+1)&=&\frac{1}{N}\sum\limits_{n=1}^N
\left[(t_n(i)-t_n^2(i))\wb_n(i)\wb_n^H(i) \right. \nonumber \\
&&\qquad\qquad\left.+t_n(i)(\vLambda^{-1}+(\vSigma(i))^{-1})^{-1}\right] \nonumber \\
\vSigma(i)&=&\sigma^2I+\frac{1}{\delta}\vGamma(i) 
\end{eqnarray}
and $t_n(i) = t(\vtheta^n(i);\vSigma)$ and $\wb_n(i)=(\vLambda^{-1}+\vSigma(i)^{-1})^{-1}(\vSigma(i))^{-1}\vtheta^n(i)$,
and $\eta'(\vtheta^q(i);\vSigma(i))$ denotes the derivatives of $\eta(\vtheta^q(i);\vSigma(i))$ with respect to $\vtheta^q(i)$, respetively.
\end{theorem}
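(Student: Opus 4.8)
The plan is to transcribe to the vector-valued setting the passage from relaxed BP to AMP of \cite{donoho2009message}. We start from the edge-independent relaxed BP recursion \eqref{eq:mu-mn}, \eqref{eq:theta-nm}, \eqref{eq:z-mn}, valid in the large system limit by Theorem~\ref{thm:independency}, insert the perturbative ans\"atze \eqref{app-theta}, \eqref{app-mu}, \eqref{app-z}, and expand each update to first order in $1/\sqrt d$. This is legitimate because $\Phi_{mn}=O(1/\sqrt d)$ by (M2), so each edge-localized correction $\delta\vtheta_m^n$, $\delta\vmu_m^n$, $\delta\zb_n^m$ is $O(1/\sqrt d)$, and any product of two such corrections, or of $\Phi_{mn}$ with one of them summed over a single deleted index, is absorbed into the $O(1/d)$ remainder. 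The node-based leading order reproduces the naive node updates, whereas the $O(1/\sqrt d)$ bookkeeping of the deleted self-indices ($\sum_{l\neq m}$ versus $\sum_l$) produces both the offset $\vmu^n(i)$ in \eqref{theta-n-a} and the Onsager reaction term in \eqref{z-m-a}.

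Concretely, I would proceed in four steps. First, Taylor-expand the thresholding map, $\vmu_m^n(i)=\eta(\vtheta_m^n(i-1);\vSigma(i-1))=\eta(\vtheta^n(i-1);\vSigma(i-1))+\eta'(\vtheta^n(i-1);\vSigma(i-1))\,\delta\vtheta_m^n(i-1)+O(1/d)$, which gives $\vmu^n(i)=\eta(\vtheta^n(i-1);\vSigma(i-1))$ (this is \eqref{mu-n-a} after shifting $i$) and $\delta\vmu_m^n(i)=\eta'(\vtheta^n(i-1);\vSigma(i-1))\,\delta\vtheta_m^n(i-1)$. Second, substitute into \eqref{eq:z-mn} and split $\sum_{q\neq n}=\sum_q$ minus the $q=n$ term; invoking $\delta\vtheta_m^q(i-1)=-\Phi_{mq}\zb^m(i-1)+O(1/\sqrt d)$ from the previous iteration, the correction sum $\sum_{q\neq n}\Phi_{mq}\,\delta\vmu_m^q(i)$ collapses to $-\sum_q\Phi_{mq}^2\,\eta'(\vtheta^q(i-1);\vSigma(i-1))\,\zb^m(i-1)+O(1/d)$, which identifies $\zb^m(i)$ with the right-hand side of \eqref{z-m-a} (up to the shift in $i$ and the obvious matrix-order convention) and leaves $\delta\zb_n^m(i)=\Phi_{mn}\vmu^n(i)+O(1/d)$. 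Third, substitute $\zb_n^l(i)=\zb^l(i)+\Phi_{ln}\vmu^n(i)+O(1/d)$ into \eqref{eq:theta-nm}: $\vtheta_m^n(i)=\sum_l\Phi_{ln}\zb^l(i)-\Phi_{mn}\zb^m(i)+\vmu^n(i)\sum_l\Phi_{ln}^2+O(1/d)$, and the normalization $\sum_l\Phi_{ln}^2\to1$ from (M3) yields \eqref{theta-n-a} with $\vtheta^n(i):=\sum_l\Phi_{ln}\zb^l(i)+\vmu^n(i)$, confirming $\delta\vtheta_m^n(i)=-\Phi_{mn}\zb^m(i)$ and closing the recursion. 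Fourth, for the covariance, Theorem~\ref{thm:independency} makes $\vSigma_m^n(i)$ and $\tilde\vSigma_n^m(i)$ edge-independent and equal to $\vSigma(i)$; since $\vtheta_m^n(i)=\vtheta^n(i)+O(1/\sqrt d)$, the empirical edge average $\tfrac1{|E|}\sum_{(m,n)\in E}V(\vtheta_m^n(i);\vSigma(i))$ concentrates, by the generalized law of large numbers of \cite{Rangan2010}, to $\tfrac1N\sum_{n=1}^N V(\vtheta^n(i);\vSigma(i))$; inserting the explicit forms of $V$, $\wb_{mn}$, $t_{mn}$ from Section~\ref{sec:rbp} gives \eqref{v-a}, and $\vSigma(i)=\sigma^2I+\tfrac1\delta\vGamma(i)$ carries over unchanged.

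The hard part will be the rigorous order accounting. After letting $N\to\infty$ and then $d\to\infty$, one must prove that in each update the unique surviving $O(1/\sqrt d)$ contribution is the single deleted self-term: the cross-correlation sums $\sum_l\Phi_{ln}\Phi_{lq}$ with $q\neq n$, and their higher-order analogues built from three or more entries of $\vPhi$, must concentrate to $0$; sums such as $\sum_l\Phi_{ln}^2\,\eta'(\cdot)\,\zb^l(i-1)$ must vanish through the zero mean of the $\zb$-messages; and $\sum_l\Phi_{ln}^3\to0$ is what controls the residual in the third step. This is exactly where the locally-tree-like hypothesis (M1) and the moment conditions (M3) are used, via the extensions of the law of large numbers and the central limit theorem of \cite{Rangan2010}; in particular (M1) guarantees that the messages feeding any node are asymptotically independent, so the perturbations do not accumulate correlated contributions across a neighborhood. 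A secondary, purely mechanical, subtlety is that $\eta'$ is the $J\times J$ Jacobian of a vector-valued map, so one must keep the matrix products in the reaction term in the correct order and check differentiability of $\eta(\cdot;\vSigma)$ and $V(\cdot;\vSigma)$ together with a uniform bound keeping the perturbations $O(1/\sqrt d)$ over the finitely many iterations considered.
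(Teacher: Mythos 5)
Your proposal is correct and follows essentially the same route as the paper's Appendix~C: insert the node-plus-edge-perturbation ans\"atze into the edge-independent relaxed BP recursion, Taylor-expand $\eta$ to first order, identify $\delta\vtheta_m^n(i)=-\Phi_{mn}\zb^m(i)$ and $\delta\zb_n^m(i)=\Phi_{mn}\vmu^n(i)$, and use $\sum_l\Phi_{ln}^2\to 1$ from (M3) to obtain the $\vmu^n(i)$ offset and the Onsager term. Your added remarks on the covariance concentration and the rigorous $O(1/\sqrt d)$ bookkeeping go slightly beyond what the paper writes down, but they do not change the argument.
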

\begin{proof}
See Appendix~C.
\end{proof}


%
%

\section{Case Study: Uncorrelated Snapshots}\label{amp-uncorr}

The  AMP update rule can be further simplified when
the input source vectors $\{\xb_j\}_{j=1}^J$  are uncorrelated to
each other. This scenario is the most optimistic in estimating the
sparse support since ${\rm rank}(\Xb)$ determines the upper bound of
maximal sparsity \cite{kly2010cmusic}. 

 More
specifically, consider the signal model (S1)-(S3) with $\vLambda=\Ib$, where $\Ib$ denotes the identity matrix. 
In this setting,   $\vSigma(i) = c(i)\Ib$ and $\vGamma(i)=\gamma(i)\Ib$  so
that we have
\begin{eqnarray*}
\vmu_{n}(i+1)=\eta(\vtheta^n(i),c(i)\Ib)= t_n(i) \frac{\vtheta^n(i)
}{1+c(i)}
\end{eqnarray*}
where the shrinkage operator $t_n(i) \equiv t(\vtheta^n(i),c(i)\Ib)$ is given by
\begin{eqnarray*}
t_n(i)=
\frac{1}{1+\frac{1-\epsilon}{\epsilon}(1+c^{-1}(i))^{\frac{J}{2}}\exp\{-\frac{\|\vtheta^n(i)\|^2}{2c(i)(1+c(i))}\}} \  .
%
\end{eqnarray*}
Fig.~\ref{fig:shrinkage} plots the shrinkage operator output with respect to the  normalized input value, $\|\vtheta^n(i)\|^2/J$, for various $J$ parameters when $c(i)=0.1$. As $J$ increases, it clearly exhibits a hard-thresholding behaviour with the threshold value of
$ {c(i)(1+c(i))\log(1+c^{-1}(i))}$ (see Appendix~D for proof).
\begin{figure}[!htbp]
\begin{center}
\epsfig{figure=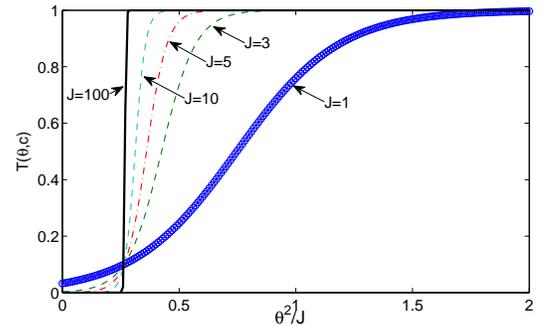,width=8cm}
\caption{Shrinkage operator for various number of snapshots for  $c(i)=0.1$ and $\epsilon=0.1$.}
\label{fig:shrinkage}
\end{center}
\end{figure}
Thanks to the hard-thresholding behavior, the AMP update rule can be further simplified.
First, we can easily see that $t(i)(1-t(i))\rightarrow 0$.
Therefore, 
we have
\begin{eqnarray}
\gamma(i+1) &=&  \frac{c(i)}{1+c(i)}\epsilon(i) 
\end{eqnarray}
where 
\begin{eqnarray}
c(i)&=&\sigma^2+\frac{1}{\delta}\frac{c(i-1)}{1+c(i-1)}\epsilon(i-1)  \\
\epsilon(i)  &=  & \frac{1}{N}\sum_{n=1}^N t_n(i)  \  ,
\end{eqnarray}
which denotes the ratio of the row whose $l_2$ norm exceeds the threshold.
In a large system limit as $N \rightarrow \infty$,  Appendix~E shows that $\epsilon(i)=E[t_n(i)]=\epsilon$.
Therefore, the corresponding  state evolution is
\begin{equation}\label{se-limit-j}
c(i+1)=\sigma^2+\frac{\epsilon}{\delta}\frac{c(i)}{1+c(i)}.
\end{equation}
The following theorem provides an important observation for the convergence of the state evolution.
\begin{theorem}\label{thm:sampling_rate}
	In noiseless case, $c(i)$ converges to 0 regardless of the initial condition if and only if $\epsilon\leq \delta$. 
\end{theorem}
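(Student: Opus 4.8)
The plan is to regard the noiseless state evolution as a scalar discrete dynamical system $c(i+1)=f(c(i))$ on $[0,\infty)$, where
\[
f(c)=\frac{\epsilon}{\delta}\,\frac{c}{1+c},
\]
obtained by setting $\sigma^2=0$ in \eqref{se-limit-j}, and to reduce the convergence question to the location of the fixed points of $f$ together with the fact that $f$ is monotone. First I would record the elementary properties of $f$ on $[0,\infty)$: it satisfies $f(0)=0$, it is strictly increasing, strictly concave, and bounded above by $\epsilon/\delta$. Because $f$ is increasing, the orbit $\{c(i)\}$ is automatically monotone (nonincreasing if $c(1)\le c(0)$, nondecreasing otherwise), so it always converges to some fixed point of $f$, and there is no oscillatory behavior to rule out. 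The fixed-point equation $f(c)=c$ is equivalent to $c\bigl(1+c-\tfrac{\epsilon}{\delta}\bigr)=0$, whose nonnegative solutions are $c=0$ and, only when $\epsilon\ge\delta$, the additional root $c^{\star}=\tfrac{\epsilon}{\delta}-1$.

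For the ``if'' direction, assume $\epsilon\le\delta$. Then for every $c>0$ one has $f(c)=\tfrac{\epsilon}{\delta}\tfrac{c}{1+c}\le\tfrac{c}{1+c}<c$, so the orbit is strictly decreasing while positive and bounded below by $0$; hence it converges, and its limit is a fixed point in $[0,\infty)$. Since $c=0$ is the only such fixed point, $c(i)\to 0$ for every initial condition $c(0)\ge 0$. For the ``only if'' direction, assume $\epsilon>\delta$, so that $c^{\star}=\tfrac{\epsilon}{\delta}-1>0$ is a genuine positive fixed point. Then the choice $c(0)=c^{\star}$ already yields the constant orbit $c(i)\equiv c^{\star}\not\to 0$; more generally, using that $f$ is concave with $f(0)=0$ and $f(c^{\star})=c^{\star}$, one gets $f(c)>c$ on $(0,c^{\star})$ and $f(c)<c$ on $(c^{\star},\infty)$, so every orbit started from $c(0)>0$ converges monotonically to $c^{\star}>0$. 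In either case convergence to $0$ fails for some (indeed, all nonzero) initial conditions, which is the contrapositive of the claim.

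The calculations here are all routine; the only points that need a little care are justifying that a bounded monotone orbit of the continuous map $f$ must converge to a fixed point (a standard fact, used in both directions), and handling the boundary case $\epsilon=\delta$: although $f'(0)=1$ there, strict concavity still gives $f(c)<c$ for all $c>0$, so the $\epsilon\le\delta$ argument goes through unchanged at equality. I expect no substantive obstacle beyond organizing the two cases cleanly.
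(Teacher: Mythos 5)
Your proof is correct and follows essentially the same route as the paper's Appendix F: identify the fixed points of the noiseless map $f(c)=\frac{\epsilon}{\delta}\,c/(1+c)$ and use monotonicity of the orbit to conclude convergence to the unique fixed point $c=0$ when $\epsilon\le\delta$. In fact your argument is more complete than the paper's, which only establishes sufficiency and never explicitly verifies the ``only if'' direction; your observation that for $\epsilon>\delta$ the positive fixed point $c^{\star}=\epsilon/\delta-1$ yields a constant non-vanishing orbit (and, by concavity, attracts every positive initial condition) supplies the missing half of the equivalence.
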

\begin{proof}
See Appendix~F.
\end{proof}	
Theorem~\ref{thm:sampling_rate} informs us that  the minimum undersampling ratio for AMP convergence approaches the sparsity rate $\epsilon$ as the number of snapshots increases. Considering  the existing results \cite{BaronDCStech} stating that $\epsilon$ is the minimum sampling rate we can achieves,  AMP provides a computationally efficient framework to achieve the optimality.

\section{Numerical Results}

Here, the experimental parameters are as following: $M=50, N=100$,  $J=3$, $\epsilon=0.1$ and $d=20$.
The sparse sensing matrix $\vPhi$ that satisfy (M1)-(M3) are generated by first drawing elements of $\{-1,1\}$ with equal probability, retaining the values with the probability of $d/M$, and scaling by $1/\sqrt{d}$. 

\begin{figure}[htbp]
 \centerline{
     \epsfig{figure=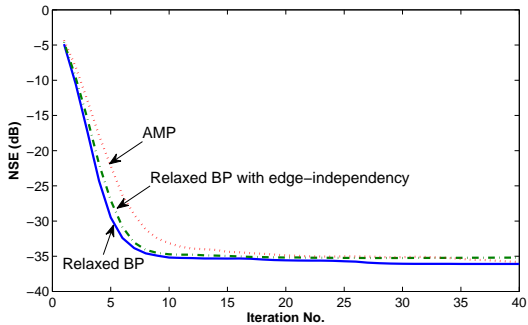,width=8cm}}
  \caption{Convergence of relaxed BP, relaxed BP with edge-independency and AMP.}
 \label{fig:convergence}
\end{figure}
Fig.~\ref{fig:convergence} illustrates the normalized squared-error (NSE) for relaxed BP, relaxed BP with edge-indepedence, and AMP,
when the signal correlation matrix $\vLambda= \Ib$ and SNR=30dB.
The results in Fig.~\ref{fig:convergence}  clearly demonstrate that all algorithms converges to the nearly equivalent MSE value.

\section{Conclusion}\label{sec:con}
%
 We showed that a vector form of message passing is appropriate to describe belief propagation in MMV problem.  
Then, we adopted the idea of Guo and Wang to approximate the message as Gaussian pdf and provided a relaxed BP algorithm by only passing mean and covariances.  It turns out that the resulting relaxed BP has an interesting shrinkage operator within the update as  a function of norm of the signal row vector. 
To reduce the computational overhead, we derived a rigorous condition for an edge independent covariance update for the relaxed BP. Finally, we derived the AMP algorithm that totally removes edge dependence even in mean update, which has complexity comparable to other iterative thresholding algorithms.  
Furthermore,  using state evolution,  we derived  a sufficient condition for joint sparse recovery, which showed that the AMP achieves the optimality as the number of snapshot increases.

%
%
%
%

\section*{Acknowledgment}

This work was supported by the Korea Science and Engineering Foundation (KOSEF) grant funded by the Korea government (MEST) (No.2010-0000855).


\bibliographystyle{IEEEtran}
\bibliography{IEEEabrv,cites,bispl2010}

\begin{thebibliography}{10}
\providecommand{\url}[1]{#1}
\csname url@samestyle\endcsname
\providecommand{\newblock}{\relax}
\providecommand{\bibinfo}[2]{#2}
\providecommand{\BIBentrySTDinterwordspacing}{\spaceskip=0pt\relax}
\providecommand{\BIBentryALTinterwordstretchfactor}{4}
\providecommand{\BIBentryALTinterwordspacing}{\spaceskip=\fontdimen2\font plus
\BIBentryALTinterwordstretchfactor\fontdimen3\font minus
  \fontdimen4\font\relax}
\providecommand{\BIBforeignlanguage}[2]{{%
\expandafter\ifx\csname l@#1\endcsname\relax
\typeout{** WARNING: IEEEtran.bst: No hyphenation pattern has been}%
\typeout{** loaded for the language `#1'. Using the pattern for}%
\typeout{** the default language instead.}%
\else
\language=\csname l@#1\endcsname
\fi
#2}}
\providecommand{\BIBdecl}{\relax}
\BIBdecl

\bibitem{CandesRUP}
E.~Cand\`{e}s, J.~Romberg, and T.~Tao, ``Robust uncertainty principles: {E}xact
  signal reconstruction from highly incomplete frequency information,''
  \emph{IEEE Trans. Information Theory}, vol.~52, no.~2, pp. 489--509, Feb.
  2006.

\bibitem{DonohoCS}
D.~Donoho, ``Compressed sensing,'' \emph{IEEE Trans. Information Theory},
  vol.~52, no.~4, pp. 1289--1306, Apr. 2006.

\bibitem{BaronDCStech}
D.~Baron, M.~B. Wakin, M.~F. Duarte, S.~Sarvotham, and R.~G. Baraniuk,
  ``Distributed compressed sensing,'' Rice University, Tech. Rep. ECE-0612,
  Dec. 2006.

\bibitem{Pottie2000}
G.~Pottie and W.~Kaiser, ``{Wireless integrated network sensors},'' \emph{Comm.
  ACM}, vol.~43, no.~5, pp. 51--58, May 2000.

\bibitem{Duarte2006IPSN}
M.~Duarte, M.~Wakin, D.~Baron, and R.~Baraniuk, ``{Universal distributed
  sensing via random projections},'' in \emph{Proc. Inf. Process. Sensor
  Networks (IPSN2006)}, Apr. 2006.

\bibitem{chen2006trs}
J.~Chen and X.~Huo, ``{Theoretical results on sparse representations of
  multiple measurement vectors},'' \emph{IEEE Trans. on Signal Processing},
  vol.~54, no.~12, pp. 4634--4643, 2006.

\bibitem{kly2010cmusic}
J.~M. Kim, O.~K. Lee, and J.~C. Ye, ``{Compressive MUSIC: A missing link
  between compressive sensing and array signal processing},'' \emph{Arxiv
  preprint arXiv:1004.4398}, 2010.

\bibitem{lb2010imusic}
K.~Lee and Y.~Bresler, ``{iMUSIC: Iterative MUSIC Algorithm for Joint Sparse
  Recovery with Any Rank},'' \emph{Arxiv preprint arXiv:1004.3071}, 2010.

\bibitem{RFG2009}
S.~Rangan, A.~K. Fletcher, and V.~K. Goyal, ``Asymptotic analysis of {MAP}
  estimation via the replica method and applications to compressed sensing,''
  \emph{Arxiv preprint arXiv: 0906.3234v2}, December 2009.

\bibitem{GuoBaronShamai2009}
D.~Guo, D.~Baron, and S.~Shamai, ``A single-letter characterization of optimal
  noisy compressed sensing,'' in \emph{Proc. 47th Allerton Conf. Commun.,
  Control, and Comput.}, Sep. 2009.

\bibitem{CSBP2010}
D.~Baron, S.~Sarvotham, and R.~G. Baraniuk, ``Bayesian compressive sensing via
  belief propagation,'' \emph{IEEE Trans. Signal Processing}, vol.~58, pp.
  269--280, Jan. 2010.

\bibitem{Rangan2010}
S.~Rangan, ``Estimation with random linear mixing, belief propagation and
  compressed sensing,'' \emph{Arxiv preprint arXiv:1001.2228v2}, May 2010.

\bibitem{donoho2009message}
D.~Donoho, A.~Maleki, and A.~Montanari, ``{Message-passing algorithms for
  compressed sensing},'' \emph{Proceedings of the National Academy of
  Sciences}, vol. 106, no.~45, pp. 18\,914--18\,919, 2009.

\bibitem{DaDeDe04}
I.~Daubechies, M.~Defrise, and C.~D. Mol, ``An iterative thresholding algorithm
  for linear inverse problems with a sparsity constraint,'' \emph{Comm. Pure
  and Applied Math.}, vol.~57, pp. 1413--1541, 2004.

\bibitem{GuoWang2008}
G.~Guo and C.-C. Wang, ``Multiuser detection of sparsely spread {CDMA},''
  \emph{IEEE J. Sel. Areas Commun.}, vol.~26, no.~3, pp. 421--431, 2008.

\end{thebibliography}

\newpage

\section*{Appendix  A}

For the calculation of mean and variance of $\nu_{\xb^q\rightarrow g_m}^{(i)}$ , we need the following lemma.
\begin{LEMM}\label{lem-mean-var-gen}
Suppose that a random variable $\xb\in\mathbb{R}^J$ has pdf $f(\xb)$ as
\begin{eqnarray*}
f(\xb)&\propto&\left[
\epsilon\frac{\exp{(-\frac{1}{2}\xb^T\vLambda^{-1}\xb)}}{(2\pi)^{\frac{J}{2}}|\vLambda|^{\frac{1}{2}}}
+(1-\epsilon)\delta(\xb)\right]\\
&& \times \frac{\exp{(-\frac{1}{2}(\xb-\vtheta)^T\vSigma^{-1}(\xb-\vtheta))}}{(2\pi)^{\frac{J}{2}}|\
\vSigma|^{\frac{1}{2}}},
\end{eqnarray*}
for some $\vtheta\in\mathbb{R}^J$, then we have the followings:
\begin{eqnarray*}
E(\xb)&=&t(\vtheta;\vSigma)\vphi \\ 
{\rm Cov}(\xb)&=&
t(\vtheta;\vSigma)(\vphi\vphi^T+(\vLambda^{-1}+\vSigma^{-1})^{-1})-t(\vtheta;\vSigma)^2\vphi\vphi^T.
%
%
\end{eqnarray*}
where
\begin{eqnarray*}
\vphi  &=& (\vLambda^{-1}+\vSigma^{-1})^{-1}\vSigma^{-1}\vtheta,\\
t {(\vtheta;\vSigma)}&=&\frac{1}{1+\frac{1-\epsilon}{\epsilon}{|\vLambda+\vSigma|^{\frac{1}{2}}}/{|\vSigma|^{\frac{1}{2}}}
e^{-\frac{1}{2}\vtheta^T(\vSigma^{-1}-(\vSigma+\vLambda)^{-1})\vtheta}
} \  .
\end{eqnarray*}
\end{LEMM}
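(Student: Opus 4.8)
The plan is to split $f$ into its ``active'' and ``inactive'' parts and recognize that, after normalization, $\xb$ is distributed as a two-component mixture: a Gaussian arising from the product of the two Gaussian factors, carried with weight $t(\vtheta;\vSigma)$, and a point mass at the origin arising from the $\delta$ factor, carried with weight $1-t(\vtheta;\vSigma)$. The two claimed moment formulas then fall out of the elementary moment formulas for such a mixture. First I would apply the standard product-of-Gaussians identity to the first term: completing the square in $\xb$ in $\mathcal{N}_J(\xb;0,\vLambda)\,\mathcal{N}_J(\xb;\vtheta,\vSigma)$ gives $\mathcal{N}_J(\xb;0,\vLambda)\,\mathcal{N}_J(\xb;\vtheta,\vSigma)=C_1\,\mathcal{N}_J(\xb;\vphi,(\vLambda^{-1}+\vSigma^{-1})^{-1})$ with $\vphi=(\vLambda^{-1}+\vSigma^{-1})^{-1}\vSigma^{-1}\vtheta$ and $C_1$ an $\xb$-independent constant. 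For the second term, $\delta(\xb)\,\mathcal{N}_J(\xb;\vtheta,\vSigma)=\mathcal{N}_J(0;\vtheta,\vSigma)\,\delta(\xb)=:C_0\,\delta(\xb)$. Hence $f(\xb)=\frac{1}{\epsilon C_1+(1-\epsilon)C_0}\big[\epsilon C_1\,\mathcal{N}_J(\xb;\vphi,(\vLambda^{-1}+\vSigma^{-1})^{-1})+(1-\epsilon)C_0\,\delta(\xb)\big]$, and setting $t:=\epsilon C_1/(\epsilon C_1+(1-\epsilon)C_0)$ exhibits $f$ as exactly the mixture above.

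Next I would check that this $t$ coincides with the formula in the lemma, which is where the real bookkeeping lives. Completing the square leaves the residual exponent $-\tfrac12\vtheta^T\big[\vSigma^{-1}-\vSigma^{-1}(\vLambda^{-1}+\vSigma^{-1})^{-1}\vSigma^{-1}\big]\vtheta$, and by the matrix inversion lemma $\vSigma^{-1}-\vSigma^{-1}(\vLambda^{-1}+\vSigma^{-1})^{-1}\vSigma^{-1}=(\vSigma+\vLambda)^{-1}$, so $C_1=(2\pi)^{-J/2}|\vSigma+\vLambda|^{-1/2}e^{-\frac12\vtheta^T(\vSigma+\vLambda)^{-1}\vtheta}=\mathcal{N}_J(\vtheta;0,\vLambda+\vSigma)$. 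Since $C_0=(2\pi)^{-J/2}|\vSigma|^{-1/2}e^{-\frac12\vtheta^T\vSigma^{-1}\vtheta}$, dividing through by $\epsilon C_1$ gives $t=\big(1+\tfrac{1-\epsilon}{\epsilon}\tfrac{C_0}{C_1}\big)^{-1}$ with $C_0/C_1=\tfrac{|\vLambda+\vSigma|^{1/2}}{|\vSigma|^{1/2}}e^{-\frac12\vtheta^T(\vSigma^{-1}-(\vSigma+\vLambda)^{-1})\vtheta}$, which is precisely the expression inside the lemma's $t(\vtheta;\vSigma)$.

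Finally I would read off the moments. For a random vector equal to $\mathcal{N}_J(\vphi,(\vLambda^{-1}+\vSigma^{-1})^{-1})$ with probability $t$ and equal to $0$ with probability $1-t$, one has $E(\xb)=t\vphi$ and $E(\xb\xb^T)=t\big(\vphi\vphi^T+(\vLambda^{-1}+\vSigma^{-1})^{-1}\big)$; subtracting $E(\xb)E(\xb)^T=t^2\vphi\vphi^T$ yields ${\rm Cov}(\xb)=t\big(\vphi\vphi^T+(\vLambda^{-1}+\vSigma^{-1})^{-1}\big)-t^2\vphi\vphi^T$, exactly as stated.

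I expect the only genuine obstacle to be the algebra in the second step: tracking the determinant prefactors and the quadratic form in the residual constant so that $C_1$ is recognized as $\mathcal{N}_J(\vtheta;0,\vLambda+\vSigma)$, for which the matrix inversion lemma is the crucial ingredient; everything else is immediate once the mixture structure is exposed. One minor point deserving a sentence is the rigorous treatment of $\delta(\xb)$ multiplied by the smooth factor $\mathcal{N}_J(\xb;\vtheta,\vSigma)$, which is legitimate because that Gaussian is continuous at the origin, so $\delta(\xb)\,\mathcal{N}_J(\xb;\vtheta,\vSigma)=\mathcal{N}_J(0;\vtheta,\vSigma)\,\delta(\xb)$ as distributions.
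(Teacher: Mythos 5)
Your proposal is correct and follows essentially the same route as the paper's own proof: complete the square in the product of the two Gaussians to identify the posterior as a two-component mixture of $\mathcal{N}_J(\vphi,(\vLambda^{-1}+\vSigma^{-1})^{-1})$ with weight $t(\vtheta;\vSigma)$ and a point mass at the origin with weight $1-t(\vtheta;\vSigma)$, then read off the mean and covariance from the mixture. Your explicit identification of the two normalizing constants as $\mathcal{N}_J(\vtheta;0,\vLambda+\vSigma)$ and $\mathcal{N}_J(0;\vtheta,\vSigma)$ via the matrix inversion lemma is exactly the content of the paper's computation of $k$ and the residual quadratic form $\vtheta^T\vDelta^{-1}\vtheta$.
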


\begin{proof}
Let
\begin{eqnarray*}
f(x)&=&k\left[
\epsilon\frac{\exp{(-\frac{1}{2}\xb^T\vLambda^{-1}\xb)}}{(2\pi)^{\frac{J}{2}}|\vLambda|^{\frac{1}{2}}}
+(1-\epsilon)\delta(\xb)\right]\\
&&\times \frac{\exp{(-\frac{1}{2}(\xb-\vtheta)^T\vSigma^{-1}(\xb-\vtheta))}}{(2\pi)^{\frac{J}{2}}|\vSigma|^{\frac{1}{2}}}\\
&=&k\left[\epsilon\frac{\exp{(-\frac{1}{2}\xb^T\vLambda^{-1}\xb-\frac{1}{2}(\xb-\vtheta)^T\vSigma^{-1}(\xb-\vtheta))}}{(2\pi)^J
|\vLambda|^{\frac{1}{2}}|\vSigma|^{\frac{1}{2}}} \right. \\
&& \left.
+(1-\epsilon)\frac{\exp{(-\frac{1}{2}\vtheta^T\vSigma^{-1}\vtheta)}}{(2\pi)^{\frac{J}{2}}|\vSigma|^{\frac{1}{2}}}\delta(\xb)\right].
\end{eqnarray*}
Since
\begin{eqnarray}
&&\xb^T\vLambda^{-1}\xb+(\xb-\vtheta)^T\vSigma^{-1}(\xb-\vtheta)\nonumber\\
&=&(\xb-\vphi)^T\vXi^{-1}(\xb-\vphi)
+\vtheta^T\vDelta^{-1}\vtheta,\label{eq:var}
\end{eqnarray}
where
\begin{eqnarray*}
\vXi = (\vLambda^{-1}+\vSigma^{-1})^{-1}, &
\vphi=\vXi \vSigma^{-1}\vtheta ,&
\vDelta = \vLambda+\vSigma \  .
\end{eqnarray*}
By plugging in Eq.~\eqref{eq:var} into the probability density function $f(\xb)$ and integrating out, we have
%
\begin{eqnarray*}
k&=&\left[\epsilon\frac{1}{(2\pi)^{\frac{J}{2}}|\Delta|^{\frac{1}{2}}}\exp{\left(-\frac{1}{2}
\vtheta^T\Delta^{-1}\vtheta\right)} \right. \\
&&\left. +(1-\epsilon)\frac{1}{(2\pi)^{\frac{J}{2}}|\vSigma|^{\frac{1}{2}}}\exp{\left(-\frac{1}{2}
\vtheta^T\vSigma^{-1}\vtheta\right)}\right]^{-1}.
\end{eqnarray*}
The resulting pdf $f(x)$ is then given by
\begin{eqnarray*}
f(\xb)&=&
\frac{\frac{e^{-\frac{1}{2}(\xb-\vphi)^T\vXi^{-1}
(\xb-\vphi)}}{(2\pi)^{\frac{J}{2}}|\vXi|^{\frac{1}{2}}}
+ \frac{1-\epsilon}{\epsilon}\frac{e^{-\frac{1}{2}(\vtheta^T[\vSigma^{-1}-\vDelta^{-1}]\vtheta}}{|\vSigma|^{\frac{1}{2}}/|\vDelta|^{\frac{1}{2}}}\delta(\xb)}
{1+\frac{1-\epsilon}{\epsilon}{|\vLambda+\vSigma|^{\frac{1}{2}}}/{|\vSigma|^{\frac{1}{2}}}
e^{-\frac{1}{2}\vtheta^T(\vSigma^{-1}-(\vSigma+\vLambda)^{-1})\vtheta}}
\end{eqnarray*}
whose  mean is 
\begin{eqnarray*}
E(\xb)=t(\vtheta;\vSigma)\vphi
\end{eqnarray*}
and the covariance is 
\begin{eqnarray*}
&&{\rm Cov}(\xb\xb^T)\\&=&E(\xb\xb^T)-E(\xb)E(\xb^T)\\
&=&T(\vtheta;\vSigma)(\vphi\vphi^T+(\vLambda^{-1}+\vSigma^{-1})^{-1})-t(\vtheta;\vSigma)^2\vphi\vphi^T .
\end{eqnarray*}

\end{proof}

\section*{Appendix B}

We need the following two theorems to prove the claim.
\begin{theorem}[Law of Large numbers \cite{Rangan2010}]\label{lln}
For each $N$ and $d$, let $m(N,d)=O(d)$ and $\xb_{N,i}^d\in\mathbb{R}^J$, $i=1,\cdots,m$ be a set of independent random variables satisfying
$$\lim\limits_{d\rightarrow\infty}\lim\limits_{N\rightarrow\infty}\xb_{N,i}^d\sim \Xb$$ where $\Xb$ denotes a random variable with  a pdf  $f_{\Xb}(\xb)$, which denotes the distribution of limiting random vector $\xb$ and $i=i(N,d)=\{1,\cdots,m\}$ is any deterministic sequence.
Here, $X\sim Y$ denotes that two random vectors $X$ and $Y$ have the same distributions. Let $a_{N,i}^d$ be a set of non-negative deterministic constants such that $b_{N,i}^d=O(1/\sqrt{d})$ and
$$\lim\limits_{d\rightarrow\infty}\lim\limits_{N\rightarrow\infty}\sum\limits_{i=1}^ma_{N,i}^d=1.$$
Then
$$\lim\limits_{d\rightarrow\infty}\lim\limits_{N\rightarrow\infty}\sum\limits_{i=1}^ma_{N,i}^d\xb_{N,i}^d\sim E(\Xb).$$
\end{theorem}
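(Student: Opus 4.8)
The plan is to read the statement as a weighted weak law of large numbers for a triangular array of independent (generally non‑identically‑distributed) $\mathbb{R}^J$‑valued random vectors, and to prove it by the second‑moment method. Write $S_{N,d}:=\sum_{i=1}^m a_{N,i}^d\,\xb_{N,i}^d$ and $\vmu:=E(\Xb)$. Since $L^2$‑convergence of $S_{N,d}$ to the constant vector $\vmu$ implies convergence in probability, hence convergence in distribution to the degenerate law at $\vmu$ (which is what ``$\sim E(\Xb)$'' must mean here), it suffices to show $E\|S_{N,d}-\vmu\|^2\to 0$ in the iterated limit $\lim_{d\to\infty}\lim_{N\to\infty}$. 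I will use the identity $E\|S_{N,d}-\vmu\|^2=\|E S_{N,d}-\vmu\|^2+\mathrm{tr}\,\mathrm{Cov}(S_{N,d})$ and bound the bias and fluctuation pieces separately, using — as holds throughout the paper's applications — that $\Xb$ has a finite second moment and that the approximating laws of $\xb_{N,i}^d$ have uniformly bounded second moments (automatic for the Gaussian‑mixture distributions that arise).

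For the bias term, set $\vmu_{N,i}^d:=E[\xb_{N,i}^d]$. The hypothesis that $\lim_{d\to\infty}\lim_{N\to\infty}\xb_{N,i}^d\sim\Xb$ for \emph{every} deterministic index sequence $i=i(N,d)$ forces uniformity: if $\sup_i\|\vmu_{N,i}^d-\vmu\|$ did not vanish in the iterated limit one could extract a deterministic sequence along which it stays bounded away from $0$, contradicting the hypothesis (once it is upgraded to moment convergence, see below). Hence $\lim_{d\to\infty}\lim_{N\to\infty}\sup_i\|\vmu_{N,i}^d-\vmu\|=0$, and then, since $a_{N,i}^d\ge 0$ and $\sum_i a_{N,i}^d\to 1$,
\[
\|E S_{N,d}-\vmu\|\ \le\ \Big(\sum_i a_{N,i}^d\Big)\sup_i\|\vmu_{N,i}^d-\vmu\|\ +\ \Big|\sum_i a_{N,i}^d-1\Big|\,\|\vmu\|\ \longrightarrow\ 0 .
\]

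For the fluctuation term, independence of the $\xb_{N,i}^d$ gives $\mathrm{Cov}(S_{N,d})=\sum_i (a_{N,i}^d)^2\,\mathrm{Cov}(\xb_{N,i}^d)$. Each $\mathrm{Cov}(\xb_{N,i}^d)$ is bounded in operator norm by an absolute constant $C$ (again from the uniform second‑moment control), and, using that the weights are non‑negative, uniformly $o(1)$ (indeed $O(1/d)$, which is the natural reading of the stated $O(1/\sqrt d)$ bound on their square roots), and sum to $1$, one has $\sum_i (a_{N,i}^d)^2\le(\max_i a_{N,i}^d)\sum_i a_{N,i}^d\to 0$. Therefore $\mathrm{tr}\,\mathrm{Cov}(S_{N,d})\le JC\sum_i(a_{N,i}^d)^2\to 0$, and combined with the previous paragraph this yields $E\|S_{N,d}-\vmu\|^2\to 0$, which is the claim.

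The main obstacle is upgrading the stated hypothesis (convergence in distribution of each $\xb_{N,i}^d$ to $\Xb$) to convergence of the first two moments, uniformly over deterministic index sequences; the uniformity is handed over by the ``for any deterministic sequence'' clause, but the moment upgrade needs a uniform‑integrability input, which in this paper is supplied by the explicit Gaussian‑mixture structure of all relevant distributions (finite moments of every order, uniformly bounded). An alternative that avoids second moments is the characteristic‑function route: the characteristic function of $S_{N,d}$ is $\prod_{i=1}^m\phi_{N,i}^d(a_{N,i}^d t)$ with $\phi_{N,i}^d$ the characteristic function of $\xb_{N,i}^d$; since the weights are uniformly $o(1)$ and sum to $1$, the first‑order expansion $\phi_{N,i}^d(a_{N,i}^d t)=1+\mathrm{i}\,a_{N,i}^d\langle t,\vmu_{N,i}^d\rangle+o(a_{N,i}^d)$, uniform in $i$, gives $\prod_i\phi_{N,i}^d(a_{N,i}^d t)\to e^{\mathrm{i}\langle t,\vmu\rangle}$, the characteristic function of the constant $\vmu$, and Lévy's continuity theorem concludes. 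Either way, uniform tail/moment control over deterministic index sequences is the crux.
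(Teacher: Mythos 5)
The paper itself contains no proof of this statement: Theorem~\ref{lln} is imported verbatim from \cite{Rangan2010} as a black-box tool for the induction in Appendix~B (the proof of Theorem~\ref{thm:independency}), so there is nothing in-paper to compare your argument against. Taken on its own merits, your second-moment proof is a sound and standard route to a weighted weak LLN for triangular arrays: the bias/fluctuation split, the use of the ``for any deterministic sequence'' clause to extract uniformity of the means via an argmax sequence, and the bound $\sum_i (a_{N,i}^d)^2\le(\max_i a_{N,i}^d)\sum_i a_{N,i}^d\to 0$ are all correct, and your reading of the garbled hypothesis ``$b_{N,i}^d=O(1/\sqrt d)$'' as $a_{N,i}^d=O(1/d)$ is the right one (in the application $a_{N,i}^d=|\Phi_{mq}|^2$ with $\Phi_{mq}=O(1/\sqrt d)$). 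You also correctly identify the one genuine gap, which is really a gap in the theorem statement as transcribed here rather than in your argument: convergence in distribution of each $\xb_{N,i}^d$ to $\Xb$ does not by itself yield convergence of first and second moments, so a uniform-integrability or uniform-moment hypothesis is needed (the original reference carries such a condition, and in this paper it holds automatically because every limiting law is a Bernoulli--Gaussian mixture with all moments finite). Your characteristic-function alternative is a nice observation but does not escape this issue, since the uniform first-order expansion of $\phi_{N,i}^d$ again requires uniform control of first moments; so either way the moment upgrade is the load-bearing assumption, and you were right to name it as such.
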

\begin{theorem}[Central Limit Theorem \cite{Rangan2010}]\label{clt}
Let $\xb_{N,i}$ be as in Theorem \ref{lln} such that for any deterministic sequence of indices $i=i(N,d)\in \{1,\cdots,m\}$, we have the limit
$$\lim\limits_{d\rightarrow\infty}\lim\limits_{N\rightarrow\infty}\sqrt{d}|E(\xb_{N,i}^d)-E(\Xb)|=0.$$
Also suppose that $a_{N,i}^d$ be a set of non-negative deterministic constants such that $a_{N,i}^d=O(1/\sqrt{d})$ and
$$\lim\limits_{d\rightarrow\infty}\lim\limits_{N\rightarrow\infty}\sum\limits_{i=1}^m|a_{N,i}^d|^2=1~{\rm and}~\lim\limits_{d\rightarrow\infty}\lim\limits_{N\rightarrow\infty}\sum\limits_{i=1}^m(a_{N,i}^d)^3=0.$$
Then
$$\lim\limits_{d\rightarrow\infty}\lim\limits_{N\rightarrow\infty}\sum\limits_{i=1}^m a_{N,i}^d(\xb_{n,i}^d-E(\Xb))\sim
\mathcal{N}(0,{\rm var}(\Xb)).$$
\end{theorem}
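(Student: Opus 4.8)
The plan is to prove this as a Lindeberg--Lyapunov central limit theorem for a triangular array of independent, uniformly small summands, using characteristic functions. Write $S_{N,d}=\sum_{i=1}^m a_{N,i}^d(\xb_{N,i}^d-E(\Xb))$ and set $\Yb_{N,i}^d=\xb_{N,i}^d-E(\Xb)$. Since the summands are $J$-dimensional, I would first reduce to a scalar statement by the Cram\'er--Wold device: fix $\tb\in\mathbb{R}^J$ and show that $\tb^T S_{N,d}\to\mathcal{N}(0,\tb^T{\rm Cov}(\Xb)\tb)$, where I write ${\rm Cov}(\Xb)$ for the covariance matrix denoted ${\rm var}(\Xb)$ in the statement. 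By independence the characteristic function factorizes as
\begin{equation*}
\phi_{N,d}(\tb)=\prod_{i=1}^m E\left[\exp\left(\sqrt{-1}\,a_{N,i}^d\,\tb^T\Yb_{N,i}^d\right)\right],
\end{equation*}
and the entire argument reduces to a second-order Taylor expansion of each factor, controlling the remainder by the elementary inequality $|e^{\sqrt{-1}x}-1-\sqrt{-1}x+\tfrac{1}{2}x^2|\le\tfrac{1}{6}|x|^3$.

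Next I would track three sums, each consuming exactly one hypothesis. The first-order term $\sum_i \sqrt{-1}\,a_{N,i}^d\,\tb^T E(\Yb_{N,i}^d)$ is a bias: because $a_{N,i}^d=O(1/\sqrt{d})$ and $\sqrt{d}\,|E(\xb_{N,i}^d)-E(\Xb)|\to 0$, each term is $o(1/d)$, and since there are $m=O(d)$ of them the total vanishes in the iterated limit; this is precisely where the $\sqrt{d}$-bias assumption is spent. The second-order term $-\tfrac{1}{2}\sum_i (a_{N,i}^d)^2\,E[(\tb^T\Yb_{N,i}^d)^2]$ converges to $-\tfrac{1}{2}\tb^T{\rm Cov}(\Xb)\tb$, using $E[\Yb_{N,i}^d(\Yb_{N,i}^d)^T]\to{\rm Cov}(\Xb)$ (the squared-bias contribution is again $o(1/d)$ and negligible) together with $\sum_i (a_{N,i}^d)^2\to 1$. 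The remainder is bounded by $\tfrac{1}{6}|\tb|^3\sum_i (a_{N,i}^d)^3\,E|\Yb_{N,i}^d|^3$, which tends to $0$ thanks to $\sum_i (a_{N,i}^d)^3\to 0$ and a uniform bound on the third absolute moments.

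Finally I would assemble the factors. Writing each as $1+z_{N,i}^d$, the estimates above yield $\max_i|z_{N,i}^d|\to 0$, $\sum_i|z_{N,i}^d|$ bounded, and $\sum_i z_{N,i}^d\to-\tfrac{1}{2}\tb^T{\rm Cov}(\Xb)\tb$. The standard lemma $\bigl|\sum_i\log(1+z_{N,i}^d)-\sum_i z_{N,i}^d\bigr|\le\sum_i|z_{N,i}^d|^2\le\max_i|z_{N,i}^d|\cdot\sum_i|z_{N,i}^d|\to 0$ then gives $\phi_{N,d}(\tb)\to\exp(-\tfrac{1}{2}\tb^T{\rm Cov}(\Xb)\tb)$, the characteristic function of $\mathcal{N}(0,{\rm Cov}(\Xb))$, and L\'evy's continuity theorem together with Cram\'er--Wold yields the claimed convergence in distribution.

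The hard part will be making the iterated limit $\lim_{d\to\infty}\lim_{N\to\infty}$ rigorous while summing over all $m=O(d)\to\infty$ indices at once. The hypotheses guarantee $\xb_{N,i}^d\sim\Xb$ only in distribution and only along each fixed deterministic index sequence $i=i(N,d)$; to upgrade this to the convergence of the second moment needed in the variance sum, and to the uniform-in-$i$ third-moment bound needed to kill the remainder, one must invoke uniform integrability of $\{|\xb_{N,i}^d|^2\}$ and a uniform third-moment bound, supplied as part of the standing "$\xb_{N,i}^d$ as in Theorem~\ref{lln}" assumptions. Once this uniform moment control is secured, every summand behaves asymptotically as if distributed exactly as $\Xb$, and the characteristic-function computation closes verbatim.
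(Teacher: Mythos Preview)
The paper does not actually prove this theorem. It is stated in Appendix~B as a tool, with the citation \cite{Rangan2010} in the theorem heading, and is immediately used (together with the companion law of large numbers, Theorem~\ref{lln}) in the proof of Theorem~\ref{thm:independency}. There is no argument for it in the paper itself, so there is nothing to compare your attempt against line by line.

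That said, your approach is the standard and correct one for results of this type: reduce to one dimension via Cram\'er--Wold, expand the characteristic function of each independent summand to second order, and verify a Lyapunov-type condition using $\sum_i (a_{N,i}^d)^3\to 0$. You have also correctly identified the genuine soft spot: the hypotheses, as written, only give convergence in distribution of $\xb_{N,i}^d$ to $\Xb$ along each index sequence, which by itself does not imply convergence of second moments or a uniform third-moment bound. In Rangan's original treatment these moment controls are part of the standing assumptions on the triangular array; here they are implicit in ``as in Theorem~\ref{lln}''. If you were writing this up carefully you would need to make that uniform integrability explicit, exactly as you note in your last paragraph.
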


{\em Proof of Theorem~2}:
 First, by applying (M2) and (M3), we can easily see that
(\ref{ts-con}) holds for $k=r$ if (\ref{s-con}) holds for $k=r$.
Hence we will show that (\ref{z-con}), (\ref{s-con}) and
(\ref{ts-con}) holds for $k=1$, and the claim holds for any $k\leq
i$ by induction. By (\ref{sigma-mn}), with
$\vGamma_m^q(1)=\vGamma(1)={\rm Cov}(\Xb)$,
$$\vSigma_n^m(1)=\sigma^2I+\sum\limits_{q\neq n}|\Phi_{mq}|^2\vGamma(1).$$
By the assumption (M2) and (M3), we can easily see that
$\vSigma_n^m(1)\rightarrow \sigma^2I+(1/\delta)\vGamma(1)$ as
$d,N\rightarrow\infty$. Also, by (\ref{z-mn}) and
$\vmu_m^n(1)=E(\Xb)=\hat{\xb}$, we have
\begin{eqnarray*}
&&\notag \zb_n^m(1)\nonumber\\
&=&(\yb^m)^T-\sum\limits_{q\in N(m)\setminus \{n\}}\Phi_{mq}\vmu_m^q(1)\nonumber \\
\notag&=&\sum\limits_{q\in N(m)}\Phi_{mq}(\xb^q)^T+(\zb^m)^T-\sum\limits_{q\in N(m)\setminus \{n\}}\Phi_{mq}\vmu_m^q(1)\\
\label{zmn2}&=&\Phi_{mn}(\xb^n)^T+(\zb^m)^T+\sum\limits_{q\in
N(m)\setminus \{n\}}\Phi_{mq}((\xb^q)^T-\hat{\xb}). \nonumber
\end{eqnarray*}
By the assumption (M1), the terms in the sum of (\ref{zmn2}) are
independent and (M2) makes the first term disappear so that by the
modified central limit theorem and the condition (M2),
\begin{eqnarray*}
\zb_n^m(1)&\sim& \mathcal{N}_J\left(0,\sigma^2I+\frac{1}{\delta}{\rm Cov}(\Xb-\hat{\xb})\right)\\
&=&\mathcal{N}_J(0,\sigma^2I+\frac{1}{\delta}\vGamma(1)).
\end{eqnarray*}

Now, we show that if (\ref{th-con}) holds for $k=r$, then
(\ref{z-con}) and (\ref{s-con}) holds for $k=r$. By the definition
of $\vmu_m^n(r)$ and $\vGamma_m^n(r)$, we have
$$\lim\limits_{d,N\rightarrow\infty}\vmu_m^n(r)\sim \eta(\Xb+\Zb(r-1);\vSigma(r-1))$$ and
$$\lim\limits_{d,N\rightarrow\infty}\vGamma_m^n(r)\sim V(\Xb+\Zb(r-1);\vSigma(r-1))$$
where $\Xb$ and $\Zb(r-1)$ has pdf
$f_{\xb}:=\epsilon\mathcal{N}_J(\xb;0,\vLambda)+(1-\epsilon)\delta(\xb)$
and $\mathcal{N}_J(\zb(r-1);0,\vSigma(r-1))$, respectively. Note
that $$\vSigma_n^m(r)=\sigma^2I+\sum\limits_{q\neq
n}|\Phi_{mq}|^2\vGamma_m^q(r).$$ By the assumption that $G_{mn}(r)$
is a tree, the terms in the above summation are statistically
independent. Since $\lim_{d,N\rightarrow\infty}\vGamma_m^n(r)\sim
V(\Xb+\Zb(r-1);\vSigma(r-1))$, by the law of large numbers in
Theorem \ref{lln},
\begin{eqnarray*}
\lim\limits_{d,N\rightarrow\infty}\vSigma_n^m(r)&\sim& \sigma^2I+\frac{1}{\delta}E[V(\Xb+\Zb(r-1);\vSigma(r-1))] \\
&=&\sigma^2I+\frac{1}{\delta}\vGamma(r).
\end{eqnarray*}
Next, we consider $\zb_n^m(r)$. Note that
\begin{eqnarray}
\notag \zb_n^m(r)&=&(\yb^m)^T-\sum\limits_{q\neq n}\Phi_{mq}\vmu_m^q(r)\\
\label{z-mn-conv}&=&\zb^m+\Phi_{mn}\vmu_m^n(r)+\sum\limits_{q\neq
n}\Phi_{mq}((\xb^q)^T-\vmu_m^q(r)). \nonumber
\end{eqnarray}
By the assumption (M2),
$\Phi_{mn}\vmu_m^n(r)=O(1/\sqrt{d})\rightarrow 0$ as $d,N\rightarrow
\infty$. Furthermore, we have
$$\lim_{d,N\rightarrow\infty}((\xb^q)^T-\vmu_m^q(r))\sim \Xb-\eta(\Xb+\Zb(r-1);\vSigma(r-1))$$ 
so that
\begin{eqnarray*}
\lim_{d,N\rightarrow\infty}{\rm Cov}[(\xb^q)^T-\vmu_m^q(r)]
\end{eqnarray*}
\begin{eqnarray*}
&=&
{\rm Cov}[(\Xb-\eta(\Xb+\Zb(r-1);\vSigma(r-1)))]\\
&=&E[V(\Xb+\Zb(r-1);\vSigma(r-1))]=\vGamma(r).
\end{eqnarray*}
By using the central limit theorem on (\ref{z-mn-conv}), we have
$\lim_{d,N\rightarrow\infty}\zb_n^m(r+1)\sim
\mathcal{N}_J(0,\sigma^2I+(1/\delta)\vGamma(r))$.

\bigskip

Finally, we show that if (\ref{z-con}) and (\ref{s-con}) holds for
$k=r$, (\ref{th-con}) holds for $k=r+1$. By the induction
hypothesis, we have $\vSigma_n^m(r)\rightarrow\vSigma(r)$ for all
$(m,n)\in E$. By the assumption (M3), we have
\begin{eqnarray*}\label{theta-mn-conv}
&&\vtheta_n^m(r)\\
&=&\left[\sum\limits_{l\neq m}
|\Phi_{ln}|^2(\vSigma_n^l(r))^{-1}\right]^{-1}\sum_{l\neq m}
\Phi_{ln}(\vSigma_n^l(r))^{-1}\zb_n^l(r)\nonumber\\
&\longrightarrow&
\sum\limits_{l\neq m}\Phi_{ln}\zb_n^l(r) \nonumber
\end{eqnarray*}
as $d,N\rightarrow\infty.$ Letting
$\eb^l(r)=\zb_n^l(r)-\Phi_{ln}(\xb^n)^T$, by (M2) and the
assumption, we have ${\rm Cov}[\eb^l(r)]\rightarrow \vSigma(r-1)$ as
$d,N\rightarrow \infty$. Then we have
\begin{eqnarray*}
\lim\limits_{d,N\rightarrow\infty}\vtheta_m^n(r)&=&\lim\limits_{d,N\rightarrow\infty}[\sum\limits_{l\neq m}\Phi_{ln}\eb^l(r)+\sum\limits_{l\neq m}\Phi_{ln}^2(\xb^n)^T]\\
&=&(\xb^n)^T+\sum\limits_{l\neq m}\Phi_{ln}\eb^l(r)
\end{eqnarray*}
by using (M2) and (M3). By the central limit theorem,
$\vtheta_m^n(r)\sim \Xb+\Zb(r-1)$.

\section*{Appendix C}
Substituting (\ref{app-z}) into (\ref{eq:theta-nm}),  we have
\begin{eqnarray*}
\vtheta_m^n(i)&=&\sum\limits_{l\neq m}\Phi_{ln}[\zb^l(i)+\delta\zb_n^l(i)]
=\sum\limits_{l=1}^M \Phi_{ln}[\zb^l(i)+\delta\zb_n^l(i)]\\&&-\Phi_{mn}\zb^m(i)+O(1/d)
\end{eqnarray*}
so that the followings hold:
\begin{eqnarray}
\label{theta-n}\vtheta^n(i)&=&\sum\limits_{l=1}^M\Phi_{ln}\zb_n^l(i),\\
\label{dtheta-nm}\delta\vtheta_m^n(i)&=&-\Phi_{mn}\zb^m(i).
\end{eqnarray}
Similarly, substituting (\ref{app-mu}) into (\ref{eq:z-mn}), we have
\begin{eqnarray*}
\zb_n^m(i)&=&(\yb^m)^T-\sum\limits_{q\neq n}\Phi_{mq}[\vmu^q(i)+\delta\vmu_m^q(i)] \\
&=&(\yb^m)^T-\sum\limits_{q=1}^N \Phi_{mq}\vmu^q(i)+\Phi_{mn}\vmu^n(i)+O(1/d)
\end{eqnarray*}
so that we have:
\begin{eqnarray}
\label{z-m}\zb^m(i)&=&(\yb^m)^T-\sum\limits_{q=1}^N \Phi_{mq}\vmu^q(i)\\
\label{dz-m}\delta\zb_n^m(i)&=&\Phi_{mn}\vmu^n(i).
\end{eqnarray}
Finally, substituting (\ref{app-theta}) into (\ref{eq:mu-mn}), we have the following Taylor series expansion
\begin{eqnarray*}
\vmu_m^n(i+1)&=&\eta(\vtheta_m^n(i);\vSigma(i))\vtheta_m^n(i)\\
&\cong&
\eta(\vtheta^n(i);\vSigma(i))\vtheta^n(i)-\eta'(\vtheta^n(i);\vSigma(i))\Phi_{mn}\zb^m(i)
\end{eqnarray*}
so that
\begin{eqnarray}
\label{mu-n}\vmu^n(i+1)&=&\eta(\vtheta^n(i);\vSigma(i))\\
\label{dmu-n}\delta\vmu_m^n(i+1)&=&-\eta'(\vtheta^n(i);\vSigma(i)))\Phi_{mn}\zb^m(i).
\label{eq:DF}
\end{eqnarray}
Hence $\vtheta^n(i)$ is updated according to
\begin{eqnarray}\label{theta-n-update}
\vtheta^n(i)&=&\sum\limits_{l=1}^M\Phi_{ln}\zb_n^l(i)
=\sum\limits_{l=1}^M\Phi_{ln}[\zb^l(i)+\Phi_{ln}\vmu^n(i)]\notag\\
&=&\sum\limits_{l=1}^M\Phi_{ln}\zb^l(i)+\vmu^n(i) \ ,
\end{eqnarray}
in the large system limit by (\ref{theta-n}) and (\ref{dz-m}) and (M3).
 Also, $\zb^{m}(i+1)$ is updated according to
\begin{eqnarray*}\label{z-m-update}
&&\zb^m(i+1)\\
&=&(\yb^m)^T-\sum\limits_{q=1}^N\Phi_{mq}\vmu_m^q(i+1) \nonumber\\
&=&(\yb^m)^T-\sum\limits_{q=1}^N
\Phi_{mq}[\vmu^q(i+1)-\eta'(\vtheta^q(i);\vSigma(i))\Phi_{mq}\zb^m(i)]\notag\\
&=&(\yb^m)^T-\sum\limits_{q=1}^N
\Phi_{mq}\vmu^q(i+1)+\sum\limits_{q=1}^N
\eta'(\vtheta^q(i);c(i))\Phi_{mq}^2\zb^m(i). \nonumber\\
\end{eqnarray*}
Here,  $\eta'(\vtheta;\vSigma)$  in \eqref{eq:DF} can be calculated
using the first order derivative of \eqref{eq:F} with respect to
$\vtheta$:
\begin{eqnarray}
  \eta'(\vtheta;\vSigma) &=& T(\vtheta;\vSigma)(\vLambda^{-1}+\vSigma^{-1})^{-1}\vSigma^{-1} \nonumber\\
  & +& (\vLambda^{-1}+\vSigma^{-1})^{-1}\vSigma^{-1}\vtheta
  t'(\vtheta;\vSigma),
\end{eqnarray}
where $t'(\vtheta;\vSigma)$ denotes the derivative of
$t(\vtheta;\vSigma)$ with respect to $\vtheta$: 
\begin{eqnarray*}\label{eq:DT}
  t'(\vtheta;\vSigma) 
   &=&  t^2(\vtheta;\vSigma) \frac{1-\epsilon}{\epsilon}\vtheta^T\left(\vSigma^{-1}-(\vSigma+\vLambda)^{-1}\right) \\
   && \times \frac{|\vLambda+\vSigma|^\frac{1}{2}}{|\vSigma|^{\frac{1}{2}}}
   e^{-\frac{1}{2}\vtheta^T\left(\vSigma^{-1}-(\vSigma+\vLambda)^{-1}\right)\vtheta}  \ .
  \end{eqnarray*}

\section*{Appendix  D }

%
%
Note that
\begin{eqnarray*}
\lim_{J\rightarrow \infty} (1+c^{-1})^{\frac{J}{2}}e^{-\frac{\theta^2}{2c(1+c)}}
&=& \lim_{J\rightarrow \infty}e^{\frac{J}{2}\left(\log(1+c^{-1})-\frac{\theta^2}{Jc(1+c)}\right)} 
\end{eqnarray*}
This value becomes 0 when $\frac{\theta^2}{J} > {c(1+c)\log(1+c^{-1})}$; 1 when $\frac{\theta^2}{J} ={c(1+c)\log(1+c^{-1})}$, and $\infty$ otherwise.
Therefore, due to the definition of the shrinkage operator, this concludes the proof.

\section*{Appendix E}

Using Eq.~\eqref{th-con} in Theorem~\ref{thm:independency}, in a large system limit, we have
\begin{eqnarray}
\vtheta^n(i)&\sim&\Xb+\Zb(i-1)
 \end{eqnarray}
where $\Xb$ and $\Zb(k-1)$ has pdf $f_{\xb}:=\epsilon\mathcal{N}_J(\xb;0,\vLambda)+(1-\epsilon)\delta(\xb)$ and $\mathcal{N}_J(0,\vSigma(i-1))$.
Since the two RV's $\Xb$ and $\Zb(i-1)$ are independent, the corresponding pdf can be therefore derived by convolving the two pdfs, providing us
\begin{eqnarray}\label{eq:f}
f\left(\vtheta^n(i)\right) =\epsilon
\frac{ e^{-\frac{\|\vtheta^n(i)\|^2}{2(1+c(i))}}}{(2\pi)^{\frac{J}{2}}{(1+c(i))^{\frac{1}{2}}}}
 +(1-\epsilon)\frac{ e^{-\frac{\|\vtheta^n(i)\|^2}{2c(i)}}}{(2\pi)^{\frac{J}{2}}{c(i)^{\frac{1}{2}}}}
\end{eqnarray}
which can be derived using the similar techniques used in Lemma~\ref{lem-mean-var-gen}.
As $N\rightarrow \infty$, $\sum_{n=1}^N t_n(\vtheta^n(i))/N \rightarrow E\left[t_n(\vtheta^n(i))\right]$. 
Using \eqref{eq:t} and \eqref{eq:f},  we can easily see that $t_n(\vtheta^n(i)) f(\vtheta^n(i)) = \epsilon
{ e^{-\frac{\|\vtheta^n(i)\|^2}{2(1+c(i))}}}/{(2\pi)^{\frac{J}{2}}{(1+c(i))^{\frac{1}{2}}}}$, so we have
\begin{eqnarray*}
E\left[ t_n(\vtheta^n(i))\right] &= &\int t_n(\vtheta^n(i)) f(\vtheta^n(i)) d\vtheta^n(i) \\
&=&  \epsilon \int \frac{ e^{-\frac{\|\vtheta^n(i)\|^2}{2(1+c(i))}}}{(2\pi)^{\frac{J}{2}}{(1+c(i))^{\frac{1}{2}}}} d\vtheta^n(i) \\
&=& \epsilon \ .
\end{eqnarray*}
This concludes the proof.

\section*{Appendix F}

Let us first characterize the behavior at the fixed point $c(i)\rightarrow x$ of the state evolution.
\begin{eqnarray*}
x = \sigma^2 + \frac{\epsilon}{\delta}\frac{x}{1+x}
\end{eqnarray*}
For the noiseless case $\sigma^2=0$,  the fixed points corresponds the intersection of $y=x$ and $y=\frac{\epsilon}{\delta}x/(1+x)$ for $x\geq 0$.
We can easily see that one of the intersection is $x=0$ and the other depends on the slope of $y=\frac{\epsilon}{\delta}x/(1+x)$ at $x=0$. Since the slope is $\epsilon/\delta$, we can easily see that there exist no other intersections other than $x=0$ when the slope is less than or equal to one, i.e. $\epsilon/\delta \leq 1$.
This is the optimal scenario since the resulting error becomes zero regardless of $c(1)$.
Next, to complete the proof, we need to show that the fixed point iteration Eq.~\eqref{se-limit-j} converges. This can be readily shown since $c(i+1)=\frac{\epsilon}{\delta}c(i)/(1+c(i))\leq c(i)$  for $\epsilon\leq \delta$ for all $i\geq 1$. Since the sequence $c(i)$ is monotone decreasing and there exist a fixed solution $c^*=0$, the algorithm converges from any initialization. This concludes the proof.

\end{document}